\def\BibTeX{{\rm B\kern-.05em{\sc i\kern-.025em b}\kern-.08em
    T\kern-.1667em\lower.7ex\hbox{E}\kern-.125emX}}
\begin{document}

\theoremstyle{definition}
\newtheorem{definition}{Definition}
\newtheorem{theorem}{Theorem}

\title{Differential Privacy for Protecting Private Patterns in Data Streams
\thanks{This work was funded by the Parrot Project (Research Council of Norway, project number 311197).}}

\author{\IEEEauthorblockN{He Gu,
Thomas Plagemann, Maik Benndorf and
Vera Goebel}
\IEEEauthorblockA{\textit{Department of Informatics} \\
\textit{University of Olso}\\
Oslo, Norway \\
{\{heg, plageman, maikb, goebel\}@ifi.uio.no}}
\and
\IEEEauthorblockN{Boris Koldehofe}
\IEEEauthorblockA{\textit{Bernoulli Institute} \\
\textit{University of Groningen} \\
Groningen, The Netherlands \\
b.koldehofe@rug.nl}
}

\maketitle

\begin{abstract}
Complex event processing (CEP) is a powerful and increasingly more important tool to analyse data streams for Internet of Things (IoT) applications. These data streams often contain private information that requires proper protection. However, privacy protection in CEP systems is still in its infancy, and most existing privacy-preserving mechanisms (PPMs) are adopted from those designed for data streams. Such approaches undermine the quality of the entire data stream and limit the performance of IoT applications. In this paper, we attempt to break the limitation and establish a new foundation for PPMs of CEP by proposing a novel pattern-level differential privacy (DP) guarantee. We introduce two PPMs that guarantee pattern-level DP. They operate only on data that correlate with private patterns rather than on the entire data stream, leading to higher data quality. One of the PPMs provides adaptive privacy protection and brings more granularity and generalization. We evaluate the performance of the proposed PPMs with two experiments on a real-world dataset and on a synthetic dataset. The results of the experiments indicate that our proposed privacy guarantee and its PPMs can deliver better data quality under equally strong privacy guarantees, compared to multiple well-known PPMs designed for data streams.
\end{abstract}

\begin{IEEEkeywords}
differential privacy, IoT, CEP, data streams
\end{IEEEkeywords}

%Introduction
%------------------------------------------------------------------
%------------------------------------------------------------------
\section{Introduction}
As telecommunication techniques and sensor industries evolve, data-driven Internet of Things (IoT) applications become increasingly important, with a significant impact on economy and human life \cite{b6}. Valuable sensor data can in many cases reveal information that individuals regard as private and must be protected from misuse. Several studies (e.g., \cite{b2, b3}) have investigated privacy-preserving mechanisms (PPMs) for data streams and have delivered provable privacy guarantees and satisfactory data quality. Most of them operate on raw data tuples and basic events in data streams. However, for IoT applications built on complex event processing (CEP) systems, focusing privacy protection on private patterns rather than raw data tuples seems more profitable. It reduces the damage to data quality and hence provides better performance of IoT applications.

Consider the example of Taxi services in which cars equipped with GPS sensors resemble IoT devices. The GPS locations of the cars and passenger requests are used to match nearby taxi drivers with waiting passengers. Some passengers do not want their locations revealed when traveling to certain sensitive locations, such as their homes, bars, or hospitals. To hide trips related to sensitive locations, typical state-of-the-art PPMs would add sufficient noise to the GPS records of an entire trip. This action reduces the values of all GPS location tuples and the utility of all location-based services. Considering that a passenger only wants to hide proximity to sensitive locations, it is more efficient to protect only the data that reveal private patterns and to avoid adding noise to all other data. It helps maintain a higher quality of the entire data stream and enhances the usefulness for all services that depend on this data stream. As such actions only protect specific private patterns, we name them \textbf{pattern-level PPMs}. 

So far, only a few studies have been conducted to develop pattern-level PPMs and their theoretical foundation. We present in this paper a novel pattern-level differential privacy (DP) guarantee. Based on the presented DP, we propose two novel pattern-level PPMs that protect any private pattern against any query that requires a binary answer, i.e., a complex event is detected or not. We introduce a customized CEP engine that receives a specification of private patterns and raw data streams that may contain private information from data subjects. When being queried by IoT applications, the CEP engine protects the private patterns before answering. In detail, this paper makes the following contributions:
\begin{itemize}
    \item We propose a novel DP guarantee, which is named pattern-level $\epsilon$-DP (pattern-level DP).
    \item We propose two pattern-level PPMs that satisfy the pattern-level DP guarantee. We extend the upper performance boundary of privacy-aware IoT applications from non-pattern-level PPMs. Our evaluation confirms a superior performance of the proposed PPMs on two datasets compared to multiple non-pattern-level PPMs.
    \item We propose the first pattern-level PPM that offers sufficient granularity to both data subjects and IoT application developers, which provides adaptive privacy protection and output data quality.
\end{itemize}

We first introduce the related work in Section II. Then, we illustrate the system model, formulate the basic definitions, and state the problem for this work in Section III. In Section IV, we propose a novel DP guarantee for pattern-level PPMs, i.e., pattern-level DP. Based on this guarantee, two pattern-level PPMs are introduced in Section V. We implement our proposed PPMs and evaluate their performance in Section VI and conclude this paper in Section VII.

% related work
%------------------------------------------------------------------%------------------------------------------------------------------
\section{Related Work}
Multiple novel theories and PPMs have been proposed for continuous data observation and data streams. Some works \cite{b2, b3} focus on adapting traditional PPMs on static databases to infinite dynamic data streams, while others \cite{b4} utilize the unique properties of data streams and their applications, e.g., reorder the detected events to protect private patterns. 

DP is one of the most reliable privacy guarantees. Although many studies successfully adapt DP to data streams, they mainly aim to transform the infinite stream into a static database and to further utilize DP on that pseudo-database\cite{b2}. 

Another well-known set of approaches rebuild the grounds of DP based on data streams with respect to the protection levels\cite{b16}, including event-level privacy\cite{b7}, user-level privacy\cite{b7}, and w-event privacy \cite{b8,b9}. Event-level privacy guarantees DP regarding each single event, user-level privacy guarantees DP regarding each single data provider, and w-event privacy guarantees DP regarding any event within a predefined window. Although these approaches deliver satisfactory results, they rarely emphasize the different characteristics of distinct data streams, which can be utilized to provide dedicated solutions and even superior performance. In the context of the Taxi example, these approaches provide similar privacy protections to a passenger regardless of the proximity to sensitive locations. Landmark privacy\cite{b10} makes a move in the direction of utilizing the different characteristics of data streams. It claims that, in reality, not all timestamps and data should be treated equally because some of them may contain significantly more private information or valuable information. Although it may seem similar to our approach, it does not take into account the connections between different data tuples, which makes it distinct from our work. Spatiotemporal events privacy \cite{b15} discusses the indistinguishability and proposes a novel DP guarantee regarding travel routes, which can be considered as a type of patterns. However, its definition of neighbors of its DP guarantee is distinct from our work, and its discussion about privacy is only limited to location-based data. Thus, it provides a significantly different study than ours.

% System  model and problem statement
%------------------------------------------------------------------%------------------------------------------------------------------
\section{System  model and problem statement}

\subsection{System Model}
Our proposed privacy guarantee and PPM are based on a system model consisting of data subjects, a trusted CEP engine, and data consumers. According to the definitions of the General Data Protection Regulation (GDPR) \cite{b12}, we detail the requirements of these components as follows:

\begin{itemize}
    \item Data subjects supply data to the CEP engine and expect their private information to be protected according to their privacy requirements.
    \item The trusted CEP Engine is a middleware between data subjects and data consumers, providing privacy protections to data subjects while delivering the required data to data consumers.
    \item Data consumers, including both data controllers and data processors, consume data from the CEP engine. They require certain data quality and provide certain IoT services. They follow the \textit{honest-but-curious} threat model.
\end{itemize}

Additional assumptions for this work are presented after the following introduction of several basic definitions.

Given an infinite data stream $S^D$, we model it as an infinite tuple $S^D = (d_1, d_2, ...)$, where $d_i$ denotes the data provided by $S^D$ at timestamp $i$. Within a data stream $S^D$, any data tuple of our interest is considered an event. We can extract all events from a given data stream $S^D = (d_1, d_2, ...)$ in temporal order and form a new stream $S^E = (e_1, e_2, ...)$, where $e_i$ is the i-th extracted event, and $e_{i+1}$ is extracted after $e_i$. We name these new streams as \textbf{event streams}. When multiple data streams are given, we merge their corresponding event streams into one single event stream. Events from different event streams with the same timestamps can be ordered arbitrarily because their temporal order has no influence on any discussion in this paper.

Given an event stream $S^E = (e_1, e_2, ...)$, the combination of multiple events in temporal order can reveal certain useful information. We denote such a combination as a pattern $$P = seq(e_1, e_2, ..., e_m),$$ where the simplest pattern $P$ is an event. We define the events in the sequence as the elements of $P$, i.e., $e_i \in P$. Multiple patterns can form a higher-level pattern. For such a pattern, we collect all related events from all related lower-level patterns and merge them together so that any pattern can always be written in the form of a sequence of events.

Similarly, an event stream $S^E = (e_1, e_2, ...)$ can be abstracted into a \textbf{pattern stream} $S^P = (P_1, P_2, ...)$, where $P_i$ is the $i$-th detected pattern. $P_i$ and $P_j$ can denote the same type of patterns even if $i \neq j$. If $P_i \neq P_j$, they could also contain the same events. It makes their occurrences related to each other, and we define these patterns as \textbf{overlapping patterns} of each other. In the context of the Taxi example, the data streams contain all GPS records from all taxis and passengers. We collect the GPS tuples of interest and merge them into the event stream. The combination of several events forms a pattern, e.g., a taxi is now parking near a hospital, and the pattern stream only consists of patterns. An illustration of the relations between two data streams, an event stream, and a pattern stream is shown in Fig. \ref{fig5}. 

\begin{figure}[htbp]
\centerline{\includegraphics[width=0.45\textwidth]{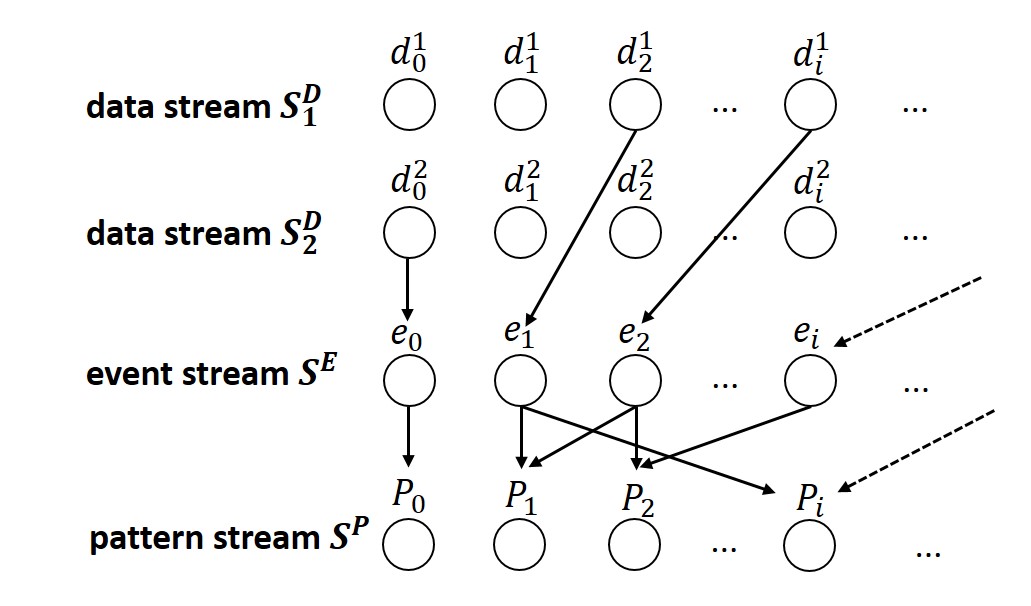}}
\caption{An illustration of the relations between data streams $S^D_1$ and $S^D_2$, event stream $S^E$, and pattern stream $S^P$.}
\label{fig5}
\end{figure}

Among the detected patterns, some may contain private information. These patterns are then defined as \textbf{private patterns}, while all others are \textbf{public patterns}. Data consumers are interested in the patterns that we call \textbf{target patterns}. A target pattern is either a private pattern or a public one. Considering the Taxi example, the pattern that a taxi is carrying a specific passenger to a hospital can be seen as a private pattern, while the pattern that this taxi is currently stuck in a traffic jam can be used as a target pattern which helps other drivers to optimize their routes. However, if the only traffic jam is happening around the hospital, then the target pattern and the private pattern overlaps and the private information needs to be protected when publishing the target pattern. Given the above definitions, our system is built under the following assumptions:
\begin{itemize}
    \item The CEP engine is trusted by data subjects so that it has access to raw data streams, including the private data.
    \item The CEP engine is trusted by data consumers and provides the answers to their queries.
    \item The queries to identify private patterns and target patterns are provided by data subjects and consumers to the CEP engine. The required data quality is defined by data consumers.
\end{itemize}

Fig. \ref{fig2} illustrates the execution procedure of our system model. During execution, we assume that data subjects provide raw data streams while data consumers consume complex events, namely patterns. The CEP engine takes raw data streams as input and delivers the results of queries sent by data consumers. 

\begin{figure}[htbp]
\centerline{\includegraphics[width=0.45\textwidth]{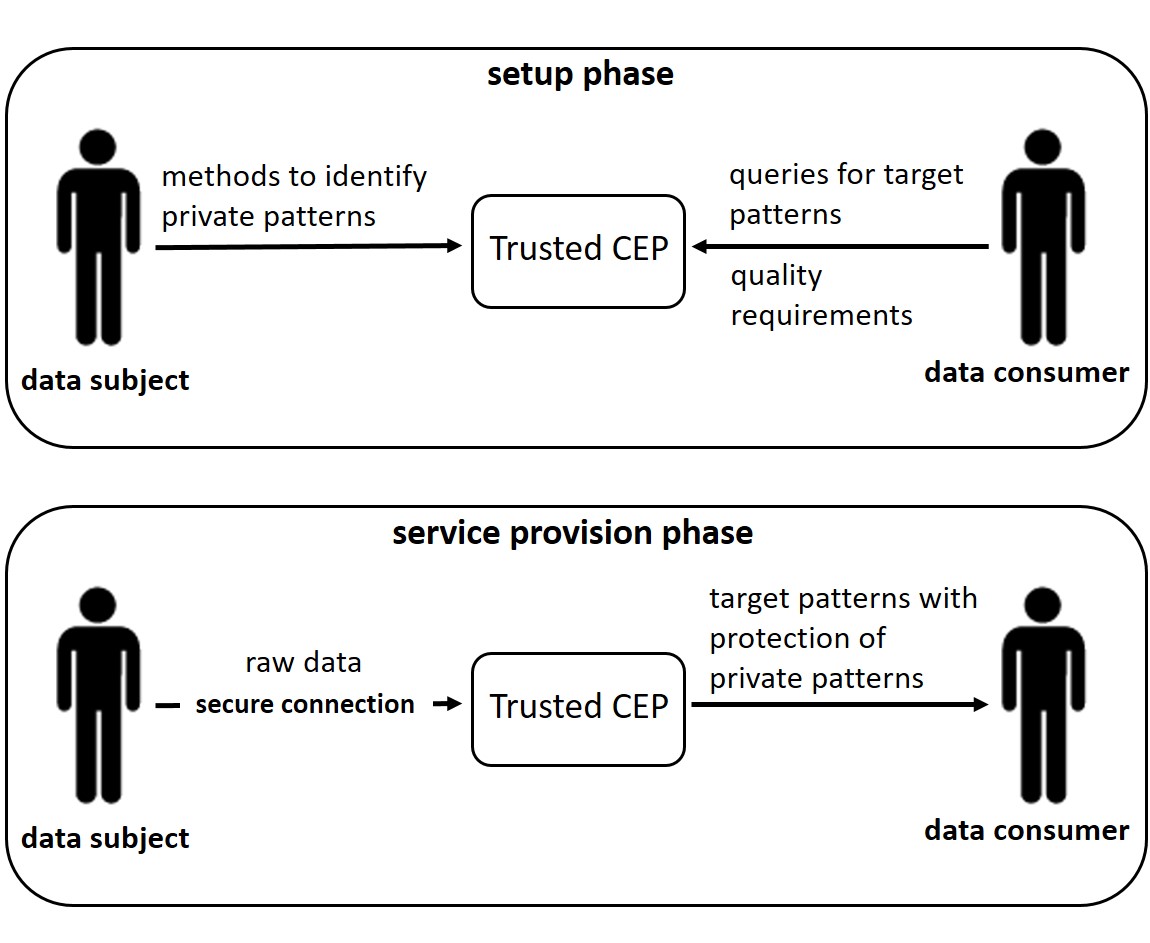}}
\caption{In the setup phase, data subjects define private patterns and data consumers define their data quality requirements and queries. In the service provision phase, the data subjects send raw data to the CEP engine, and the data consumers receive answers to their queries with privacy protection from the CEP engine. }
\label{fig2}
\end{figure}

%Problem statement
%------------------------------------------------------------------
%------------------------------------------------------------------
\subsection{Problem Statement}
The objective of our proposed PPM is to protect predefined private patterns, while maintaining a predefined data quality. Privacy protection is achieved by establishing indistinguishability between private patterns and public patterns. Given a private pattern $P_{\mathit{pri}}$ and any target pattern $P_{\mathit{tar}}$, then for a continuous query that queries the existence of $P_{\mathit{tar}}$, we attempt to maximize the indistinguishability of the answers to this query regardless of the existence of $P_{\mathit{pri}}$. In the context of the Taxi example, our proposed PPMs aim to provide similar location-based services, e.g., similar traffic jam predictions, regardless of whether the passenger travels to sensitive locations or not.

In terms of data quality, an IoT application aims to detect as much target patterns as possible, which can be measured by the recall ($\mathit{Rec}$)
\begin{equation}
    \mathit{Rec} = \frac{\mathit{TP}}{\mathit{TP} + \mathit{FN}},
\end{equation}
and to reduce the number of false detections, which can be measured by the precision ($\mathit{Prec}$)
\begin{equation}
    \mathit{Prec} = \frac{\mathit{TP}}{\mathit{TP} + \mathit{FP}},
\end{equation}
where $\mathit{TP}$ is True Positive, $\mathit{FP}$ is False Positive, and $\mathit{FN}$ is False Negative.

However, evaluating only either precision or recall can be misleading, because identifying all patterns as target patterns can raise recall to $100\%$ while severely damaging precision and overall performance. Therefore, the metric of data quality can be a combination of both recall and precision:
\begin{equation}
    Q = \alpha \mathit{Prec} + (1-\alpha) \mathit{Rec},
\end{equation}
where $Q$ denotes the quality of the output data, and $\alpha$ is a hyperparameter predefined by data subjects and consumers.

We aim to minimize the decrease in data quality caused by a PPM. We use Mean Relative Error (MRE) to measure the decrease:
\begin{equation}
    \mathit{MRE}_Q = \frac{Q_{\mathit{ord}} - Q_{\mathit{PPM}}}{Q_{\mathit{ord}}},
\end{equation}
where $Q_{\mathit{ord}}$ denotes the ordinary data quality without applying any PPM, while $Q_{\mathit{PPM}}$ is the quality after employing a PPM.

By solving these two problems, our PPM can either (1) maximize data quality when given a fixed privacy budget, (2) or maximize privacy protection when given data quality requirements.

\section{Pattern-level $\epsilon$-differential privacy}
In this section, we propose a novel pattern-level privacy guarantee. It achieves $\epsilon$-DP with respect to a given pattern, and is hence named pattern-level $\epsilon$-DP (pattern-level DP).

Based on the definitions introduced in Section III, we first define a basic type of neighbors, i.e., \textbf{in-pattern neighbors}.

\begin{definition}
    Two patterns $P = seq(e_1, e_2, ..., e_m),$ and $P' = seq(e'_1, e'_2, ..., e'_m),$ of the same length are \textbf{in-pattern neighbors} if and only if (1) there exists a unique $i$ such that $P_i \neq P'_i$, and (2) for all $j \neq i$, $P_j = P'_j$ holds.
\end{definition}

In-pattern neighboring indicates that the only difference between two detected patterns is a basic event $e_i$. For clarification, we prepare ourselves by defining the pattern instances and the pattern types as follows.
\begin{definition}
    A pattern type $\mathcal{P}$ is a group of patterns specified by a given query $q$. All elements in $\mathcal{P}$ can be identified by $q$, and any pattern instance $P_i$ identified by $q$ is an element of $\mathcal{P}$, i.e., $P_i \in \mathcal{P}$.
\end{definition}

Based on this definition, we propose pattern-level neighboring.

\begin{definition}
    Given a predefined pattern type $\mathcal{P}$, and two infinite pattern streams $S^P = (P_1, P_2, ...)$ and $S^{P'} = (P'_1, P'_2, ...)$, then $S^P$ and $S^{P'}$ are \textbf{pattern-level neighbors} with respect to $\mathcal{P}$ if and only if for any integer $i$ such that $P_i \in \mathcal{P}$, (1) $P_i$ and $P'_i$ are in-pattern neighbors, and (2) for $j \neq i$, $P_j = P'_j$ holds.
\end{definition}

Similar to $\epsilon$-DP \cite{b11}, we now define the pattern-level $\epsilon$-DP, which guarantees $\epsilon$-DP with respect to a predefined type of patterns $\mathcal{P}$.

\begin{definition}
    Assume that $\mathcal{M}$ is a mechanism that takes a pattern stream $D$ as input and outputs a response $R$ that belongs to the group of all possible responses $\mathcal{R}$. Then we claim that $\mathcal{M}$ satisfies \textbf{pattern-level} $\boldsymbol{\epsilon}$\textbf{-DP} of a given type of patterns $\mathcal{P}$ (pattern-level DP of $\mathcal{P}$) if and only if for any pattern-level neighbors $S^P$ and $S^{P'}$ of $\mathcal{P}$ and any sets of response $\mathcal{R}_i \subseteq \mathcal{R}$, 
    $$\Pr[\mathcal{M}(S^P) \in \mathcal{R}_i] \leq 
    e^{\epsilon} \cdot \Pr[\mathcal{M}(S^{P'}) \in \mathcal{R}_i]$$ holds.
\end{definition}

Pattern-level DP of pattern type $\mathcal{P}$ only takes into account the events that constitute $P \in \mathcal{P}$ and only aims to provide privacy guarantees for $P$. Therefore, in practice, $P$ is the private pattern predefined by data subjects. Mechanisms that satisfy pattern-level DP intend to output similar results regardless of the existence of private patterns. Compared to other DP mechanisms introduced before in the related work, the pattern-level DP provides more customized privacy protection under different scenarios. It reduces the privacy budget $\epsilon$ assigned to events that are less important to protect private patterns and utilizes the budget more efficiently.

% ============================================
\section{Pattern-level PPMs}
% ============================================
In this section, we propose two pattern-level PPMs that satisfy pattern-level DP, i.e., a uniform PPM and an adaptive PPM. As they are named, we distribute the privacy budget $\epsilon$ uniformly or adaptively. 

These PPMs are built under the assumption that all answers to the queries are binary. We see the potential to further extend these PPMs so that they can process queries that require numerical or categorical answers, but this is not the subject of this paper. This assumption is made under the following considerations:

\begin{itemize}
    \item Several types of queries that are interested in patterns only require binary answers.
    \item Binary answers can be equivalent to categorical or numerical answers in some cases. In the Taxi example, drivers can be interested in the numbers of nearby passengers, which are numerical data. They may utilize them to predict popular areas. However, the exact number is not necessary, since their true intention is to know if this area is crowded, which can be answered in binary. 
\end{itemize}

\subsection{Uniform Pattern-level PPM}
Given the assumption we made, the randomized response is an intuitive approach to achieve our proposed pattern-level DP. We demonstrate it in Definition 5 based on event streams.

\begin{definition} [Randomized response in event streams]
Given a pattern $P = seq(e_1, e_2, ..., e_m)$, a mechanism $\mathcal{M}$ offers a randomized response if it takes the existence of events $I(e_i) \in \{0, 1\}$ as input and outputs responses $R_i \in \{0, 1\}$ for each pattern with possibility  
\begin{equation}
    \begin{cases*}
        \Pr(R_i = j | I(e_i) = j) = 1- p_i & \\
        \Pr(R_i = j | I(e_i) = k) = p_i, & \\
\end{cases*}
\end{equation}
where $j \neq k$.
\end{definition}

We study pattern-level DP characteristics of randomized response mechanisms, and it indicates that privacy budgets add up among different events within a pattern, as formulated in Theorem 1 and its proof.

\begin{theorem}
Given a pattern $P = seq(e_1, e_2, ..., e_m) \in \mathcal{P}$ and a randomized response mechanism $\mathcal{M}$ with $p_1, p_2, ..., p_n \leq \frac{1}{2}$ that takes $\boldsymbol{I}(\boldsymbol{e}) = (I(e_1), I(e_2), ...,I(e_n))$ as inputs and outputs responses $\boldsymbol{R} = (R_1, R_2, ..., R_n)$, $\mathcal{M}$ guarantees $\sum_{i:e_i \in P}{\ln{\frac{1-p_j}{p_j}}}$-pattern-level DP with respect to a given type of patterns $\mathcal{P}$.
\end{theorem}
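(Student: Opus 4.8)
The plan is to unfold the definition of pattern-level $\epsilon$-DP into a pointwise bound on the likelihood ratio of $\mathcal{M}$'s output and then exploit the product structure of the randomized response. First I would fix two pattern-level neighbors $S^P$ and $S^{P'}$ with respect to $\mathcal{P}$ and translate the statement back to the event-stream level on which $\mathcal{M}$ actually acts: by Definition 3 together with Definition 1, $S^P$ and $S^{P'}$ agree on every pattern instance except one instance $P_i \in \mathcal{P}$, and on that instance they are in-pattern neighbors, so the underlying event indicator vectors $\boldsymbol{I}(\boldsymbol{e})$ and $\boldsymbol{I}(\boldsymbol{e}')$ coincide on every event that does not belong to $P_i$. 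Since $P_i$ is an instance of the private pattern $P = seq(e_1,\dots,e_m)$, the two inputs to $\mathcal{M}$ differ only on the coordinates $\{\, i : e_i \in P \,\}$.

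Next I would use that Definition 5 applies the randomized response to each event independently, so for any elementary response vector $\boldsymbol{R} = (R_1,\dots,R_n)$,
$$\Pr[\mathcal{M}(\boldsymbol{I}(\boldsymbol{e})) = \boldsymbol{R}] = \prod_{i=1}^{n} \Pr[R_i \mid I(e_i)].$$
Forming the ratio for $\boldsymbol{I}(\boldsymbol{e})$ versus $\boldsymbol{I}(\boldsymbol{e}')$, every factor with $e_i \notin P$ cancels because the two inputs agree there, leaving $\prod_{i : e_i \in P} \Pr[R_i \mid I(e_i)]/\Pr[R_i \mid I(e'_i)]$. For a single coordinate the ratio of the two conditional probabilities in Definition 5 is either $\tfrac{1-p_i}{p_i}$ or $\tfrac{p_i}{1-p_i}$, and the hypothesis $p_i \le \tfrac12$ guarantees $\tfrac{1-p_i}{p_i} \ge 1 \ge \tfrac{p_i}{1-p_i}$, so each factor is at most $\tfrac{1-p_i}{p_i}$. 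Hence
$$\frac{\Pr[\mathcal{M}(\boldsymbol{I}(\boldsymbol{e})) = \boldsymbol{R}]}{\Pr[\mathcal{M}(\boldsymbol{I}(\boldsymbol{e}')) = \boldsymbol{R}]} \le \prod_{i : e_i \in P} \frac{1-p_i}{p_i} = \exp\!\Big(\textstyle\sum_{i : e_i \in P} \ln\tfrac{1-p_i}{p_i}\Big) = e^{\epsilon},$$
and summing this pointwise bound over all $\boldsymbol{R} \in \mathcal{R}_i$ yields $\Pr[\mathcal{M}(S^P)\in\mathcal{R}_i] \le e^{\epsilon}\Pr[\mathcal{M}(S^{P'})\in\mathcal{R}_i]$, i.e.\ pattern-level $\epsilon$-DP of $\mathcal{P}$ with $\epsilon = \sum_{i : e_i \in P}\ln\tfrac{1-p_i}{p_i}$.

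The step I expect to be the main obstacle is the first one: pinning down exactly which event indicators may change between pattern-level neighbors. A literal reading of Definition 1 lets an in-pattern neighbor differ in a single event, which would already give the stronger bound $\max_{i:e_i\in P}\ln\tfrac{1-p_i}{p_i}$; obtaining the stated sum requires treating the worst case in which the presence or absence of the whole private pattern is at stake, so that every event of $P$ may be toggled and the independent per-event losses compose additively. In any case, since each factor $\tfrac{1-p_i}{p_i}\ge 1$, the product over $e_i\in P$ is a valid upper bound on whichever subset of coordinates actually differs, so the theorem follows. I would also be careful that the independence/product form of Definition 5 is precisely what makes the out-of-pattern factors cancel; without it one would need a conditional chain-rule argument and a coordinate-wise union bound instead.
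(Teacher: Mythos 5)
Your proof is correct and takes essentially the same route as the paper's: factor the output likelihood ratio into per-event terms, cancel the factors for events outside $P$ (where the neighbors agree), and bound each remaining factor by $\tfrac{1-p_j}{p_j}$ using $p_j \le \tfrac12$, giving the sum $\sum_{j:e_j\in P}\ln\tfrac{1-p_j}{p_j}$. Your version is in fact more careful than the paper's one-line display, since you make explicit the pointwise-to-set summation and the definitional ambiguity about how many events may differ between in-pattern neighbors.
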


\begin{proof}
For two pattern-level neighbors $S^P = (P_1, P_2, ...)$ and $S^{P'} = (P'_1, P'_2, ...)$ of a pattern type $\mathcal{P}$, if given any randomized response mechanism $\mathcal{M}$ as described above, then we see that  
\begin{dmath}
\frac{\Pr[\mathcal{M}(S^P) \in \mathcal{R}]}{\Pr[\mathcal{M}(S^{P'}) \in \mathcal{R}]} = \prod_{j:e_j \not\in P}{\frac{\Pr[\mathcal{M}(I(e_j)) = R_j]}{\Pr[\mathcal{M}(I(e_j')) = R_j]}} \cdot \prod_{j:e_j \in P}{\frac{\Pr[\mathcal{M}(I(e_j)) = R_j]}{\Pr[\mathcal{M}(I(e_j')) = R_j]}} \leq \prod_{j:e_j \in P}{\frac{1-p_j}{p_j}}.
\end{dmath}
It indicates that $\mathcal{M}$ guarantees $\ln{\prod_{j:e_j \in P}{\frac{1-p_j}{p_j}}}$-pattern-level DP, which can be rewritten as $\sum_{j:e_j \in P}{\ln{\frac{1-p_j}{p_j}}}$-pattern-level DP.
\end{proof}

\begin{figure}[htbp]
\centerline{\includegraphics[width=0.5\textwidth]{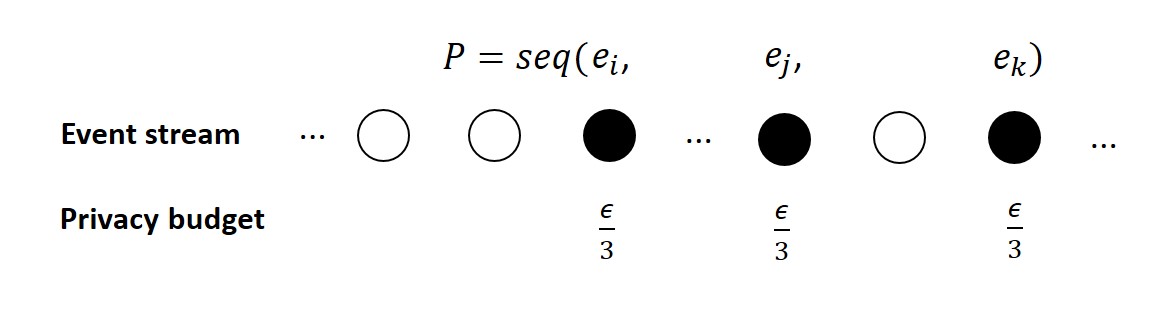}}
\caption{Uniform Privacy Budget Distribution}
\label{fig3}
\end{figure}

Pattern-level DP guarantees DP only with respect to a given pattern type $\mathcal{P}$. In practice, we consider $\mathcal{P}$ as the type of private patterns predefined by data subjects, since we aim to protect them. A basic approach is to distribute the given privacy budget $\epsilon$ evenly to each related pattern, as illustrated in Fig. \ref{fig3}.

Although overlapping patterns or repeating patterns will lead to more complicated situations, they do not damage or weaken our proposed pattern-level DP guarantee. The PPMs applied to these patterns can be independent, which leads to independent privacy budgets distribution for different patterns. It only brings more noise to the private information, which strengthens the protection of privacy.

\subsection{Adaptive Pattern-level PPM Based on Historical Data}
The uniform distribution may not provide optimal performance because some events are critical for detecting target patterns while containing little private information. An intuitive approach is to increase the privacy budgets for these events, leading to weaker privacy protection and higher data quality.

If given a private pattern $P =  seq(e_1, e_2, ..., e_m)$, we then denote the privacy budget distributed to the i-th event as $\epsilon_i = \ln{\frac{1-p_i}{p_i}}$. For a given total privacy budget $\epsilon$, $\sum_{i=1}^m{\epsilon_i} = \epsilon$ holds. 

The challenge of managing the budget distribution is equivalent to determining $p_i$. The $p_i$-s have a direct influence on our defined quality metrics $Q = \alpha Prec + (1-\alpha) Rec$, as they directly control the possibility of response to detect events and answer queries. Inspired by statistical learning, we introduce a bidirectional stepwise algorithm, i.e., Algorithm 1, to estimate the optimal $p_i$-s. The historical data utilized for the estimation are produced by the data subjects as defined in our system model. They trust our CEP engine so that they grant us the access to their historical data, even if these data may contain certain private information.

\subsection{Future Improvements}
We made an ideal assumption that, for a given pattern, all relevant events and patterns are ``perfectly" defined by data subjects and data consumers. However, it is a rigorous assumption since neither of these entities is expected to be privacy experts. Their classification of private patterns, public patterns, and relevant events can be inaccurate, raising the risk of privacy disclosure.

\begin{algorithm}
\caption{Bidirectional stepwise privacy budget distribution algorithm}
\begin{algorithmic}[1]
\STATE For any private pattern $P = seq(e_1, e_2, ..., e_m)$ and a given total privacy budget $\epsilon$, distribute evenly the budget to each element, namely, $\epsilon_i \gets \frac{\epsilon}{m}$;
\STATE Select the size of each step based on field experience. A suggestion is $\delta_\epsilon \gets \frac{m\epsilon}{100}$; 
\STATE Calculate data quality metric $Q = \alpha \mathit{Prec} + (1-\alpha) \mathit{Rec}$, and initialize $Q_1 = Q_2 = ... = Q_m = Q$;
 \WHILE{$\epsilon_1, \epsilon_2,...,\epsilon_m  \in [0, \epsilon]$ and $Q_{i:Q_i = \max{Q_i}} \geq Q$}
    \STATE $i \gets 1$; $Q \gets Q_{i:Q_i = \max{Q_i}}$;
    \WHILE{$i \leq m$}
        \STATE $\epsilon_i \gets \epsilon_i + \delta_\epsilon$; $\epsilon_{j:j \neq i} \gets \epsilon_{j} - \frac{\delta_\epsilon}{m}$; 
        \STATE $Q_i \gets \alpha \mathit{Prec} + (1-\alpha) \mathit{Rec}$; 
    \ENDWHILE
    \IF{$Q_{i:Q_i = \max{Q_i}} \geq Q$}
        \STATE $\epsilon_{i:Q_i= \max{Q_i}} \gets \epsilon_i + \delta_\epsilon$;  $\epsilon_{j:j \neq i} \gets \epsilon_{j} - \frac{\delta_\epsilon}{m}$.
    \ENDIF
 \ENDWHILE
\end{algorithmic}
\end{algorithm}

We do not emphasize this problem in this paper, but we see the potential to solve it. Similarly to our proposed adaptive PPM, we can estimate the correlations among events and patterns based on historical data, which enables us to reveal most of the latent relationships. It reduces the risks brought by the lack of privacy expertise.

\section{Evaluation}
\label{sec:eval}
We gather a real-world dataset from public resources to evaluate the proposed PPMs and compare their performance with that of other state-of-the-art techniques. In addition, because most existing real-world public datasets are not sufficient to demonstrate the advantages of the proposed PPMs, a synthetic dataset is also produced for evaluation.

% ============================================
\subsection{Experiment Setup}
\subsubsection{Dataset}
The utilized real-world dataset is named \textbf{Taxi}\footnote{https://www.microsoft.com/en-us/research/publication/t-drive-trajectory-data-sample/} \cite{b13, b14}. It contains the GPS records of 10357 taxis within Beijing. For each taxi, their unique id, GPS location, and timestamps were collected every 177 seconds, which corresponds to a distance of approximately 623 meters. The dataset is used in this paper because it is consistent with our example Taxi and has been used to evaluate other state-of-the-art DP theories \cite{b9}.

It is critical to specify the private patterns and target patterns during evaluation. We aim to monitor any taxi that enters the target pattern area while attempting not to reveal the actions of moving into the private pattern area. For the Taxi dataset, we randomly select 20\% GPS locations as the private pattern area and assign another 40\% as part of the target pattern area. We prefer to have some overlapping areas between target patterns and private patterns, because the evaluation is meaningful only if they are dependent and relevant to each other. Therefore, we randomly select 50\% of the private pattern area to become target pattern area, which leads to an overall 50\% target pattern area.

We generate the synthetic datasets and their private and target patterns with a random algorithm to test our proposed PPM in a complex scenario. We synthesized 1000 artificial datasets by repeating Algorithm 2 independently multiple times.

\begin{algorithm}[H]
\begin{algorithmic}[1]
\STATE Denote $20$ basic events as $e_1, e_2, ..., e_{20}$;
\STATE Randomly generate 20 numbers between 0 and 1 as the natural occurrence of $e_i$, i.e., $\Pr(e_i)$;
\STATE $m \gets 1$;
\WHILE{$m \leq 1000$}
    \STATE $n \gets 1$; set $L_m$ as an empty list;
    \WHILE{$n \leq 20$}
        \IF{a random number between 0 to 1 is smaller than $\Pr(e_n)$}
            \STATE Generate one $e_n$ and put it into list $L_m$;
        \ENDIF
    \ENDWHILE
\ENDWHILE
\STATE Collect the $1000$ $L_m$-s to be part of the synthetic dataset, where we regard each $L_m$ as a collection of events that detected within a window;
\STATE Among $20$ patterns $P_1, P_2, ..., P_{20}$, randomly select 3 as private ones and 5 as target ones;
\STATE Assign randomly 3 events to each of the $20$ patterns. If all three events are contained in one $L_m$, then their corresponding pattern is regarded as being detected.
\caption{Synthesis of artificial experimental data}
\end{algorithmic}
\end{algorithm}

\subsubsection{Performance}
Our proposed PPM is evaluated in terms of the precision and recall of detecting the target pattern. They are measured by MRE between data quality metrics without and with the application of PPM. Since the background knowledge gained from the datasets is insufficient, we set $\alpha = 0.5$ for the data quality metric $Q$, which emphasizes the precision and the recall equally.

From the related work, we select two typical algorithms of w-event DP, i.e., budget division (BD) and budget absorption (BA), as baseline \cite{b8}. Furthermore, since landmark privacy offers a similar privacy guarantee to our proposed PPM, we also compare its performance by evaluating its proposed adaptive algorithm\cite{b1}. The privacy budgets of BD, BA, and landmark privacy are converted from their original definitions to the one defined by pattern-level DP. The conversion is achieved by aggregating the original privacy budgets related to the predefined private pattern types. For different pattern types, an increase or a decrease of privacy budgets are both possible after a conversion. 

\subsection{Experiment Results and Analysis}
We implemented the experiments with Python. The result in Fig. \ref{fig4} shows that our pattern-level PPMs perform significantly better on synthetic datasets and relatively better on the real dataset Taxi. We did not specifically create a dedicated dataset which benifits our PPMs, but the advantage of them is expanded on the synthetic dataset because it simulates a more complex situation than the Taxi dataset.

\begin{figure}
\centerline{\includegraphics[width=0.45\textwidth]{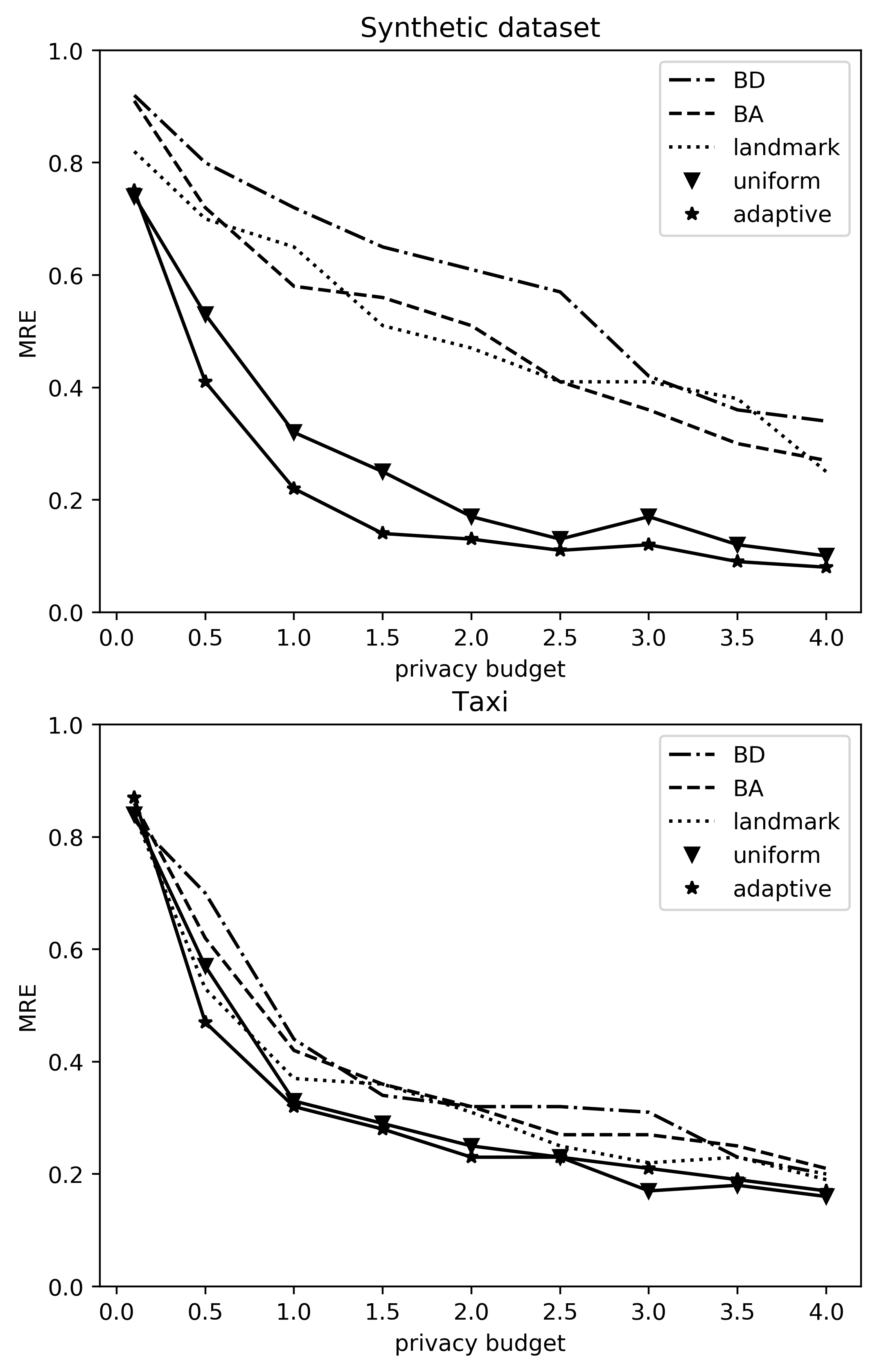}}
\caption{MRE with respect to different privacy budget $\epsilon$. The experiment datasets are Taxi and a synthetic dataset generated by Algorithm 2. }
\label{fig4}
\end{figure}

For the Taxi dataset, we see that the difference between the uniform and adaptive approaches is evidently smaller, as well as the remaining three algorithms. Recall that the test on Taxi is based on simple pattern types, i.e., GPS locations only. It indicates that detecting a pattern is almost identical to detecting a basic event. Therefore, the advantage of our proposed PPMs is significantly reduced, especially for lower privacy budgets.

\section{Conclusion}
With the growth of CEP-based IoT applications, the need for protecting the privacy of IoT data subjects increases. Non-pattern-level PPMs guarantee privacy with a superfluous cost of data quality. In order to reduce their redundancy, we propose in this paper the first pattern-level DP and multiple novel pattern-level PPMs. Our approach protects private patterns under equally strong privacy guarantees while providing significantly better data quality than non-pattern-level approaches. We showed that our approach surpasses other state-of-the-art ones by evaluating our PPMs on a synthetic and a real-world dataset.

Our approach extends the upper performance boundary for IoT-based PPMs and provides sufficient granularity for both privacy protection and data quality. For future work, we see the potential to strengthen the generality of our approach so that it becomes available for more CEP operators and IoT services.

\section*{Acknowledgment}

The authors want to thank Mikhail Fomichev, Matthias Hollick, and Rana Tallal Javed for their valuable feedback on an earlier version of this paper.

\vspace{12pt}

\end{document}